\documentclass{article}

\usepackage{graphicx} 
\usepackage[margin=1in]{geometry}
\usepackage{algpseudocode}
\usepackage{algorithm, comment}
\usepackage{amssymb}
\usepackage{amsthm}
\usepackage{amsmath}
\newtheorem{theorem}{Theorem}
\newtheorem{lemma}{Lemma}

\title{Efficient Matrix Factorization Via Householder Reflections}
\author{
  \begin{tabular}{cc}
    Anirudh Dash & Aditya Siripuram \\
    \small Department of Electrical Engineering & \small Department of Electrical Engineering \\
    \small Indian Institute of Technology, Hyderabad & \small Indian Institute of Technology, Hyderabad \\
    \small Hyderabad, Telangana, India & \small Hyderabad, Telangana, India \\
    \small \texttt{ee21btech11002@iith.ac.in} & \small \texttt{staditya@ee.iith.ac.in}
  \end{tabular}
}
\date{}

\newcommand{\keywords}[1]{\textbf{Keywords:} #1}

\begin{document}

\maketitle

\begin{abstract}
    Motivated by orthogonal dictionary learning problems, we propose a novel method for matrix factorization, where the data matrix $\mathbf{Y}$ is a product of a Householder matrix $\mathbf{H}$ and a binary matrix $\mathbf{X}$. First, we show that the exact recovery of the factors $\mathbf{H}$ and $X$ from $\mathbf{Y}$ is guaranteed with $\Omega(1)$ columns in $\mathbf{Y}$. Next, we show approximate recovery (in the $l_\infty$ sense) can be done in polynomial time($O(np)$) with  $\Omega(\log n)$ columns in $\mathbf{Y}$. We hope the techniques in this work help in developing alternate algorithms for orthogonal dictionary learning.
\end{abstract}

\keywords{Orthogonal Dictionary Learning, Matrix Factorization, Householder Matrices, Guaranteed Recovery, Reduced Complexity, Non-Iterative Approach}

\section{Introduction}

The orthogonal dictionary factorization problem is posed as follows: Given a matrix $\mathbf{Y} \in \mathbb{R}^{n \times p}$, can we find an orthogonal matrix $\mathbf{V} \in \mathbb{R}^{n \times n}$ and a coefficient matrix $\mathbf{X} \in \mathbb{R}^{n \times p}$ such that $\mathbf{Y=VX}$. Variants of this problem appear in standard sparse signal processing literature \cite{elad2010sparse} and signal processing-based graph learning approaches \cite{dong2019learning}, \cite{thanou2014learning}. In the latter, the orthogonal matrix $\mathbf{V}$ is known to be an eigenvector matrix of a suitable graph; in particular, we note that the orthogonal matrix has some additional structure. The goal of this work is to investigate recovery guarantees on $\mathbf{V}$ and $\mathbf{X}$ under strong structural assumptions on the orthogonal matrix.

The standard unstructured dictionary learning problem ($\mathbf{Y=DX}$) has been well investigated in literature. The seminal work of Olshausen et. al in 1997 \cite{olshausen1997sparse} involved recovering sparse representations of an image. Since then, extensive research has led to several fascinating methods to solve the aforementioned problem. Engan et al. \cite{engan1999method} proposed the method of optimal directions (MOD), while Aharon et al. \cite{aharon2006k} proposed the K-SVD method for sparse representations of signals. Mairal et. al provided an online dictionary learning method for sparse coding \cite{mairal2009online}, which was further improved by adding sparsity constraints. It has been shown \cite{sun2015complete} that recovering $\mathbf{D}$ and $\mathbf{X}$ under the assumption of a Bernoulli-Gaussian model on $\mathbf{X}$ for an arbitrary dictionary can be reduced to the orthogonal dictionary learning problem ($\mathbf{Y=VX}$).

New algorithms for orthogonal dictionary learning, based on alternate minimization, were proposed by Arora et. al \cite{arora2014new}, \cite{agarwal2014learning}, while some results obtained on studying local identifiability \cite{geng2014local}, \cite{schnass2015local} have also been in the mix. Li et. al \cite{li2017provable} used alternating gradient descent for non-negative matrix factorization with strong correlations. Traditionally, $l_0$ minimization and its convex relaxation- $l_1$ minimization have been widely used in this field (Spielman et al. \cite{spielman2012exact}). Recently, Zhai et al. \cite{zhai2020complete} developed an iterative approach for complete dictionary learning via $l_4$ minimization over an orthogonal group.

Theoretical results pertinent to the above problem are usually of two kinds: proving the validity of proposed algorithms, and identifying fundamental conditions (i.e. number of columns $p$ required) for any algorithm to recover the factors $\mathbf{V}$ and $\mathbf{X}$.

This work focuses on the problem of orthogonal dictionary learning and is motivated by the following observations:
\begin{enumerate}
    \item Some applications, specifically graph learning, place additional structural assumptions on the orthogonal dictionary
    \item Most of the existing work is on unstructured orthogonal dictionary learning
    \item Most of the existing techniques are iterative, and are sensitive to initialization \cite{liang2022simple}
\end{enumerate}

This naturally motivates us to investigate the effect of significant structural constraints on the orthogonal matrix. We start the above investigation by assuming that the orthogonal matrix is a Householder matrix. We note that every orthogonal matrix can be expressed as a product of Householder matrices \cite{golub2013matrix, uhlig2001constructive}, thus allowing for the development of a new procedure to solve the orthogonal dictionary factorization problem in the future. Solutions to this problem could potentially lead to a new set of non-iterative and initialization-free algorithms to solve the above problem.

In this paper, we analyze the fundamental unit of orthogonal matrices, Householder matrices, as a first step to solving the general orthogonal dictionary factorization problem. We utilize the correlation between the entries of a Householder matrix to develop an algorithm which can recover the factor matrices completely deterministically. As compared to the $\Omega(n \log n)$ \cite{spielman2012exact} bound on the number of columns in the coefficient matrix for orthogonal dictionary learning, we show that recovery of Householder matrices is possible in $\Omega(1)$ columns; albeit with exponential time. To recover the matrices in polynomial time, we allow for some small errors. We show that (with a Bernoulli model on the coefficient matrix), not only is the recovery possible in polynomial time, but it achieves the $\Omega(\log n)$ columns bound in the $l_{\infty}$ sense. It is also an 'all at a time' recovery, i.e., it doesn't find the matrices one column at a time. This paves the way for a new approach to possibly solve the orthogonal dictionary factorization problem. As compared to previous methods, our technique is completely free of initialization and is a non-iterative approach with theoretical guarantees for recovery. The computational complexity involved in the calculation of the factors is $O(np)$, which is substantially smaller than previous methods such as \cite{zhai2020complete}.

\section{Problem Formulation}
We describe the 2 primary scenarios under which we solve our problem.
\subsection{Matrix Recovery for General Binary Matrix}

Consider the following setup:
Following up on the introduction, consider the product
$\mathbf{Y}=\mathbf{H}\mathbf{X} $, where $\mathbf{H}=\mathbf{I}-2\mathbf{uu^T}$ is an orthogonal matrix for some (unknown) unit vector $\mathbf{u}$. We refer by $u_i$ the entries of $\mathbf{u}$. Given the data matrix $\mathbf{Y}$, we want to try and estimates $\hat{\mathbf{H}}$ (and consequently the vector $\hat{u}$) and $\hat{\mathbf{X}}$. We also want to establish why this is tractable for a binary matrix $\mathbf{X}$ but not for a general matrix. Our error metric is \(e=\lVert \mathbf{u}-\hat{\mathbf{u}} \rVert_{\infty} = \max |u_i - \hat{u}_i|\). Some of our results pertain to perfect recovery, i.e. $\mathbf{u} = \hat{\mathbf{u}}$. \\
Note that we say $f(n)= \Omega_r(g(n))$ if $f(n) \geq C_r g(n)$ for some constant $C_r$ depending  only on $r$; for all $n$ large enough.

\subsection{Matrix Recovery for Bernoulli Matrix}
We make the following changes to the previous case:
Define a sparsity model on $\mathbf{X}$, where elements are filled by drawing from an iid Bernoulli distribution with a parameter $\theta$.
\begin{equation}
    \begin{aligned}
        X_{ij} = 1 \text{ w.p. } \theta, 0 \text{ w.p. } 1-\theta.
    \end{aligned}
    \label{eq:1*}
\end{equation}

\section{Main Results}
\subsection{Matrix Recovery for General Binary Matrix}
\begin{theorem}  (\emph{Zero error achievability}) For the general model, $\mathbf{Y}=\mathbf{H}\mathbf{X}$, where $\mathbf{H}=\mathbf{I}-2\mathbf{uu^T}$ and $\mathbf{X}$ is an arbitrary binary matrix, $\mathbf{X}$ can be \emph{uniquely} recovered with $p=\Omega(1)$ columns in $\mathbf{Y}$ (In fact, just two (distinct) columns suffice). Note that there is no assumption of a probabilistic model for the entries of $\mathbf{X}$.
    \label{thm:1}
\end{theorem}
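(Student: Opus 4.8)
The plan is to exploit the rank-one structure of $\mathbf{H} = \mathbf{I} - 2\mathbf{u}\mathbf{u}^T$. Each column of $\mathbf{Y}$ has the form $\mathbf{y} = \mathbf{H}\mathbf{x} = \mathbf{x} - 2(\mathbf{u}^T\mathbf{x})\mathbf{u}$, where $\mathbf{x}$ is a binary (hence integer) vector. So every column of $\mathbf{Y}$ is an integer vector minus a scalar multiple of $\mathbf{u}$. The key observation is that if two columns $\mathbf{x}$ and $\mathbf{x}'$ of $\mathbf{X}$ are distinct, then $\mathbf{y} - \mathbf{y}' = (\mathbf{x} - \mathbf{x}') - 2\,\mathbf{u}^T(\mathbf{x} - \mathbf{x}')\,\mathbf{u}$ is again an integer vector minus a multiple of $\mathbf{u}$; by looking at several such differences (or at a single column together with the knowledge that $\mathbf{X}$ is binary) one can pin down the direction of $\mathbf{u}$ up to sign and then recover its magnitude from the unit-norm constraint.

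More concretely, I would proceed as follows. First, take two distinct columns $\mathbf{y}^{(1)}, \mathbf{y}^{(2)}$ of $\mathbf{Y}$, arising from distinct binary columns $\mathbf{x}^{(1)}, \mathbf{x}^{(2)}$ of $\mathbf{X}$. Write $c_k = 2\,\mathbf{u}^T\mathbf{x}^{(k)}$, so $\mathbf{y}^{(k)} = \mathbf{x}^{(k)} - c_k\mathbf{u}$. Then $\mathbf{y}^{(1)} - \mathbf{y}^{(2)} = (\mathbf{x}^{(1)} - \mathbf{x}^{(2)}) - (c_1 - c_2)\mathbf{u}$, and the vector $\mathbf{d} = \mathbf{x}^{(1)} - \mathbf{x}^{(2)}$ has entries in $\{-1, 0, 1\}$. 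The idea is that the fractional parts of the entries of $\mathbf{y}^{(1)} - \mathbf{y}^{(2)}$ (after clearing the unknown scalar) reveal $\mathbf{u}$ up to scaling: indeed $(c_1 - c_2)\mathbf{u} = \mathbf{d} - (\mathbf{y}^{(1)} - \mathbf{y}^{(2)})$, so $\mathbf{u}$ is proportional to $\mathbf{d} - (\mathbf{y}^{(1)} - \mathbf{y}^{(2)})$ for the correct (finitely many) choices of $\mathbf{d} \in \{-1,0,1\}^n$. Second, I would argue that the true $\mathbf{d}$ is identifiable: among all candidate ternary vectors, only the correct one yields a vector that, after normalizing to unit length, reproduces $\mathbf{Y}$ exactly via an integer $\mathbf{X}$; genericity of $\mathbf{u}$ (or a short case analysis using that $\mathbf{u}$ is a fixed unit vector) rules out spurious coincidences. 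Third, once $\mathbf{u}$ is known up to sign — and the sign ambiguity is harmless since $\mathbf{H}$ depends on $\mathbf{u}$ only through $\mathbf{u}\mathbf{u}^T$ — set $\hat{\mathbf{H}} = \mathbf{I} - 2\mathbf{u}\mathbf{u}^T$ and recover $\hat{\mathbf{X}} = \hat{\mathbf{H}}^T\mathbf{Y} = \hat{\mathbf{H}}\mathbf{Y}$, which is forced to be the unique binary matrix consistent with the data.

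The main obstacle I anticipate is the identifiability argument in the second step: one must show that no two distinct binary matrices $\mathbf{X}, \mathbf{X}'$ and Householder matrices $\mathbf{H}, \mathbf{H}'$ give the same $\mathbf{Y}$ with only a constant number of columns, and that the combinatorial search over candidate difference vectors $\mathbf{d}$ has a unique valid solution. Handling degenerate cases — for instance when $\mathbf{u}$ has zero entries (so some rows of $\mathbf{Y}$ are purely integer and carry no information about the scale of $\mathbf{u}$), or when $\mathbf{u}^T\mathbf{x}^{(k)} = 0$ so a column of $\mathbf{Y}$ is itself binary and degenerate — will require care; presumably one picks the pair of columns (or a third column) to avoid these, which is where the ``two distinct columns suffice, for generic choices'' flavor of the statement comes from. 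The unit-norm constraint $\|\mathbf{u}\| = 1$ is what ultimately fixes the one remaining scalar degree of freedom, so I would make sure that constraint is invoked exactly once and that it pins down $\mathbf{u}$ uniquely up to the benign global sign.
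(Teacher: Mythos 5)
Your overall scheme (enumerate candidate integer/binary patterns, solve for $\mathbf{u}$ up to scale, use the unit-norm constraint and a consistency check across columns) is the same brute-force strategy the paper uses, but the proposal leaves out precisely the step that constitutes the theorem's content: the proof that the consistency check across two distinct columns has at most one solution. You flag this yourself as ``the main obstacle,'' and the way you propose to discharge it --- ``genericity of $\mathbf{u}$ ... rules out spurious coincidences'' --- does not match the statement, which asserts unique recovery for an \emph{arbitrary} fixed unit vector $\mathbf{u}$ and an \emph{arbitrary} binary $\mathbf{X}$ with two distinct columns, with no genericity or randomness assumption available to you. A single column (or a single difference of columns, as in your setup) genuinely does admit many spurious $(\mathbf{H},\mathbf{X})$ pairs --- the paper itself notes there can be on the order of $n$ of them --- so the argument cannot be that the correct ternary difference vector $\mathbf{d}$ is identifiable from one difference; it must be that no spurious solution survives \emph{both} columns simultaneously. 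Nothing in your proposal engages with that two-column interaction.

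The paper's proof supplies exactly this missing piece by a structural analysis: assuming two solutions $(\mathbf{u_1},\mathbf{X_1})$ and $(\mathbf{u_2},\mathbf{X_2})$ for the same column, it partitions the indices by membership in the two non-supports (sets $P_1,\dots,P_4$), expresses $u_{2i}$ affinely in terms of $u_{1i}$ on each piece, and then uses the unit-norm and sum constraints to derive $\lvert P_2\rvert=\lvert P_4\rvert$ and the ratio identities $(c_2-c_{s2})/(c_1-c_{s1})=c_1/c_2$, $u_{1i}/u_{2i}=c_1/c_2$ off the symmetric difference. From these it shows that a spurious $\mathbf{u_2}$ reproduced identically from a second column would force the two ground-truth columns of $\mathbf{X}$ to coincide, contradicting distinctness. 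Some argument of this kind (or an equivalent one) is needed in your second step; without it, the claim that ``only the correct $\mathbf{d}$ yields a vector that reproduces $\mathbf{Y}$ exactly'' is an assertion of the theorem, not a proof of it. You would also need to handle the degenerate branches you mention (e.g.\ $c_1=c_2$ in your difference formulation, where the difference carries no information about $\mathbf{u}$) rather than assume columns can be chosen to avoid them, since the statement allows any two distinct columns.
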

\begin{theorem} If $\mathbf{X}$ is an arbitrary (non-binary) matrix, $\mathbf{H}, \mathbf{X}$ \emph{cannot} be \emph{uniquely} recovered (even up to permutation and sign) with any
    number of columns $p$. Thus, the assumption that $\mathbf{X}$ is a binary matrix is justified for recovery of the Householder dictionary.  \end{theorem}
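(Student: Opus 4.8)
The plan is to exploit the fact that a Householder matrix is its own inverse. First I would observe that if $\mathbf{H}' = \mathbf{I} - 2\mathbf{v}\mathbf{v}^T$ is \emph{any} Householder matrix and we set $\mathbf{X}' := \mathbf{H}'\mathbf{Y}$, then $\mathbf{H}'\mathbf{X}' = (\mathbf{H}')^2\mathbf{Y} = \mathbf{Y}$ since $(\mathbf{H}')^2 = \mathbf{I}$. So $(\mathbf{H}', \mathbf{X}')$ is a valid factorization of $\mathbf{Y}$ of the required form for \emph{every} unit vector $\mathbf{v}$, and this holds regardless of $p$. When $\mathbf{X}$ is constrained to be binary (Theorem~\ref{thm:1}), almost all of these candidates are eliminated because $\mathbf{H}'\mathbf{Y}$ fails to have $0/1$ entries unless $\mathbf{H}'$ is essentially the true factor; but once $\mathbf{X}$ is allowed to be an arbitrary real matrix, there is nothing left to rule them out, and this is the source of the non-uniqueness.

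The remaining step is to check that these candidate factorizations are not all related by the permutation-and-sign ambiguity, i.e. that there exist unit vectors $\mathbf{v}_1, \mathbf{v}_2$ with $\mathbf{H}_{\mathbf{v}_2} \neq \mathbf{H}_{\mathbf{v}_1}\mathbf{P}\mathbf{D}$ for every permutation matrix $\mathbf{P}$ and every diagonal sign matrix $\mathbf{D}$, so that no relabelling or sign flip of columns carries one factorization to the other. I would argue this by counting: the set of $n \times n$ Householder matrices is in bijection with $\mathbb{RP}^{n-1}$ via $\mathbf{v} \mapsto \mathbf{I} - 2\mathbf{v}\mathbf{v}^T$ (two unit vectors give the same matrix iff they differ by a sign), which is an infinite set for $n \geq 2$, whereas for a fixed $\mathbf{H}_{\mathbf{v}_1}$ there are at most $n!\,2^n$ matrices of the form $\mathbf{H}_{\mathbf{v}_1}\mathbf{P}\mathbf{D}$; hence by pigeonhole some $\mathbf{H}_{\mathbf{v}_2}$ lies outside this finite orbit. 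Alternatively one can exhibit an explicit pair: in the leading $2\times 2$ block (padding with the identity for $n > 2$), take $\mathbf{H}_{\mathbf{v}_1} = \mathrm{diag}(-1,1,\dots,1)$ and let $\mathbf{H}_{\mathbf{v}_2}$ be the reflection associated with $\mathbf{v}_2 \propto (\cos\alpha, \sin\alpha, 0, \dots, 0)$; then $\mathbf{H}_{\mathbf{v}_1}^{-1}\mathbf{H}_{\mathbf{v}_2}$ is a planar rotation by $\pm 2\alpha$, which for generic $\alpha$ (e.g.\ $\alpha = \pi/8$) is not a multiple of $\pi/2$ and therefore not a signed permutation matrix.

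I expect the main obstacle to be expository rather than mathematical: pinning down precisely what ``recovery up to permutation and sign'' should mean for this factorization, and confirming that the continuum of solutions produced above genuinely cannot collapse into a single equivalence class — which the cardinality argument settles cleanly. Two small caveats worth recording are that $n = 1$ is a degenerate exception (there is only one $1\times1$ Householder matrix, $[-1]$, so recovery is trivially unique there), and that the contrast with Theorem~\ref{thm:1} should be emphasized, since it is exactly the binary structure of $\mathbf{X}$ — equivalently, the demand that $\mathbf{H}'\mathbf{Y}$ be $0/1$-valued — that collapses this infinite family of candidates down to an essentially unique one.
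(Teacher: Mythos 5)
Your proof is correct, and it reaches the conclusion by a genuinely cleaner and more general route than the paper. The paper's own proof is a concrete counterexample: it fixes $n=2$, writes down two specific unit vectors $\mathbf{u_1}=(\sqrt{1/3},\sqrt{2/3})^T$ and $\mathbf{u_2}=(1/\sqrt{2},1/\sqrt{2})^T$ with their Householder matrices, and solves the resulting linear equations column by column to exhibit $\mathbf{X_{1}}, \mathbf{X_{2}}$ with $\mathbf{H_1X_1}=\mathbf{H_2X_2}$ (e.g.\ $\mathbf{X_{1p}}=(2\sqrt{2}/3,\,1/3)^T$ against $\mathbf{X_{2p}}=(1,0)^T$). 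Your argument instead isolates the structural reason behind such examples: since every Householder matrix is an involution, $\mathbf{X}'=\mathbf{H}'\mathbf{Y}$ gives a valid factorization for \emph{every} unit vector $\mathbf{v}$, so there is a continuum of solutions for any $n\geq 2$, any $\mathbf{Y}$, and any $p$. You also handle the ``up to permutation and sign'' qualifier more carefully than the paper does: the cardinality comparison between $\mathbb{RP}^{n-1}$ and the finite orbit $\{\mathbf{H}_{\mathbf{v}_1}\mathbf{P}\mathbf{D}\}$, together with the explicit rotation-by-$2\alpha$ pair, shows the solution set cannot collapse to a single equivalence class, whereas the paper simply asserts non-uniqueness up to sign and permutation after displaying its one pair. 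What the paper's version buys is a fully concrete, checkable instance requiring no appeal to the involution property; what yours buys is generality in $n$, independence from any particular $\mathbf{Y}$, and a sharper explanation of exactly why the binary constraint in Theorem~\ref{thm:1} is what restores identifiability. Your caveat about the degenerate case $n=1$ is a point the paper does not address.
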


\subsection{Matrix Recovery for Bernoulli Matrix}

\begin{lemma}
    \label{lem:1}
    (\emph{Parameter Recovery}): Consider the parametric model on the coefficient matrix described in \eqref{eq:1*}. For \emph{any} orthogonal matrix $\mathbf{V}$, the Bernoulli parameter $\theta$ can be recovered with the probability of error upper bounded by the following relation:
    \begin{align*}
        \mathbb{P}\left[ \lvert \hat{\theta} - \theta \rvert > t \right]
         & \leq 2 \exp \left(-2t^2np \right).
    \end{align*}
\end{lemma}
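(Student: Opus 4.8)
The plan is to construct a natural estimator $\hat\theta$ from the entries of $\mathbf{Y}$ and then apply Hoeffding's inequality. The key observation is that although $\mathbf{Y} = \mathbf{V}\mathbf{X}$ is not itself binary, one can recover the Bernoulli parameter through a linear functional of $\mathbf{Y}$ that, after applying $\mathbf{V}^{-1} = \mathbf{V}^T$, reduces to a sum of the i.i.d. Bernoulli entries of $\mathbf{X}$. Concretely, since $\mathbf{X} = \mathbf{V}^T\mathbf{Y}$, the total sum of all entries of $\mathbf{X}$ equals $\mathbf{1}^T\mathbf{X}\mathbf{1} = \mathbf{1}^T\mathbf{V}^T\mathbf{Y}\mathbf{1} = (\mathbf{V}\mathbf{1})^T\mathbf{Y}\mathbf{1}$, which is computable from $\mathbf{Y}$ (given $\mathbf{V}$, or in our setting given $\mathbf{u}$, but the lemma is stated for known $\mathbf{V}$). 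Then I would set
\[
\hat\theta = \frac{1}{np}\sum_{i=1}^{n}\sum_{j=1}^{p} X_{ij} = \frac{1}{np}(\mathbf{V}\mathbf{1})^T\mathbf{Y}\mathbf{1}.
\]

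Next I would verify that $\hat\theta$ is unbiased: since each $X_{ij}$ is Bernoulli$(\theta)$, $\mathbb{E}[\hat\theta] = \theta$. The $np$ summands $X_{ij}$ are independent and each is bounded in $[0,1]$, so Hoeffding's inequality applies directly to the average of $np$ independent $[0,1]$-valued random variables:
\[
\mathbb{P}\left[|\hat\theta - \theta| > t\right] \le 2\exp\!\left(-2t^2 np\right),
\]
which is exactly the claimed bound. This step is essentially a textbook application once the estimator is in place.

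The main obstacle — really more of a modeling subtlety than a technical hurdle — is justifying that $\hat\theta$ is genuinely "recovered" from the data: the estimator as written uses knowledge of $\mathbf{V}$, which is consistent with the lemma's phrasing ("for any orthogonal matrix $\mathbf{V}$"), but I should be careful to present it so that the dependence on $\mathbf{V}$ is explicit and the probabilistic statement is over the randomness of $\mathbf{X}$ alone. An alternative, if one wants an estimator that does not presuppose $\mathbf{V}$, is to exploit a $\mathbf{V}$-invariant quantity such as $\|\mathbf{Y}\|_F^2 = \|\mathbf{X}\|_F^2 = \sum_{ij} X_{ij}^2 = \sum_{ij} X_{ij}$ (using that $X_{ij}\in\{0,1\}$ so $X_{ij}^2 = X_{ij}$), giving $\hat\theta = \|\mathbf{Y}\|_F^2/(np)$ directly from $\mathbf{Y}$ with no knowledge of $\mathbf{V}$ at all; the same Hoeffding bound then goes through verbatim. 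I would likely present this Frobenius-norm version as the cleanest route, remark that it requires no knowledge of $\mathbf{V}$, and note the bound is dimension-free in the sense that it only depends on the total count $np$.
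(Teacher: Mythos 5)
Your preferred route---the estimator $\hat\theta = \|\mathbf{Y}\|_F^2/(np)$, justified by $\|\mathbf{Y}\|_F^2 = \|\mathbf{X}\|_F^2$ under orthogonality and $X_{ij}^2 = X_{ij}$ for binary entries, followed by Hoeffding on $np$ independent $[0,1]$-valued summands---is exactly the paper's proof (and its Algorithm 2), so the proposal is correct and essentially identical in approach. Your first estimator $(\mathbf{V}\mathbf{1})^T\mathbf{Y}\mathbf{1}/(np)$ presupposes knowledge of $\mathbf{V}$ and is rightly set aside in favor of the $\mathbf{V}$-free Frobenius version, which is the point of the lemma in the paper's pipeline where $\mathbf{V}$ is unknown.
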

Thus, recovery with high accuracy is possible in $\Omega(1)$ columns. The computational complexity of calculating $\hat{\theta}$ is $O(np)$.

\begin{theorem}
    \label{thm:2}
    (\emph{Householder Recovery}) Consider the parametric model described in \eqref{eq:1*}. Then, the unit vector $\mathbf{u}$ defining the Householder matrix $\mathbf{H}$ can be recovered with high accuracy with the following recovery guarantee
    \[
        \mathbb{P} \left ( \lVert \mathbf{u} - \hat{\mathbf{u}} \rVert_\infty > t\right ) \leq 2 n\exp \left(-8t^2 c^2 \theta^2 p \right)
    \]
    where \(c=\sum_{i=1}^{n} {u_i}\) and \(c \neq 0\). Thus for
    \[p=\Omega\left( \frac {\log (2n^2)}{8t^2\theta^2c^2} \right), \quad  \mathbb{P} \left ( \lVert \mathbf{u} - \hat{\mathbf{u}} \rVert_\infty > t\right ) \rightarrow 0. \]
    The computational complexity involved in calculating $\mathbf{\hat{u}}$ is $O(np)$.
\end{theorem}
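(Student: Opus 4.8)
The plan is to reduce the problem to the concentration of the \emph{row means} of $\mathbf{Y}$. Writing $Y_{ij}=X_{ij}-2u_i(\mathbf{u}^\top\mathbf{x}_j)$ for the $j$-th column $\mathbf{x}_j$ of $\mathbf{X}$ and taking expectations under the Bernoulli model gives $\mathbb{E}[Y_{ij}]=\theta(1-2cu_i)$ with $c=\sum_i u_i$. Solving for $u_i$ yields the inversion formula $u_i=\frac{1}{2c}\bigl(1-\mathbb{E}[Y_{ij}]/\theta\bigr)$, which suggests the plug-in estimator $\hat{u}_i=\frac{1}{2\hat{c}}\bigl(1-\bar{Y}_i/\hat{\theta}\bigr)$, where $\bar{Y}_i=\frac{1}{p}\sum_j Y_{ij}$ is the $i$-th row mean, $\hat{\theta}$ is the estimator of Lemma~\ref{lem:1}, and $\hat{c}$ is obtained from the identity $\sum_i\bigl(1-\mathbb{E}[Y_{ij}]/\theta\bigr)=2c^2$ by averaging the computable quantities $1-\bar{Y}_i/\hat{\theta}$ and taking the positive square root (the sign of $c$ is immaterial since $\mathbf{H}$ is invariant under $\mathbf{u}\mapsto-\mathbf{u}$). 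All these quantities cost $O(np)$ to form, which gives the claimed complexity.

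The crux is a dimension-free concentration bound for $\bar{Y}_i$. The key observation is that the centered row sum telescopes against a row of $\mathbf{H}$:
\[
p\bar{Y}_i-\mathbb{E}[p\bar{Y}_i]=\sum_{k=1}^{n}\sum_{j=1}^{p} H_{ik}\,(X_{kj}-\theta),
\]
a sum of $np$ independent bounded random variables, the $(k,j)$ term ranging over an interval of length $|H_{ik}|$. Since $\mathbf{H}$ is orthogonal its rows are unit vectors, so $\sum_{k,j}H_{ik}^2=p\sum_k H_{ik}^2=p$; this is exactly what cancels the ambient dimension. Hoeffding's inequality then gives $\mathbb{P}\bigl(|\bar{Y}_i-\mathbb{E}[Y_{ij}]|>s\bigr)\le 2\exp(-2ps^2)$. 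Taking $s=2c\theta t$ and recalling that (with the true $c,\theta$) $|u_i-\hat{u}_i|=\frac{1}{2|c|\theta}|\bar{Y}_i-\mathbb{E}[Y_{ij}]|$, a union bound over $i=1,\dots,n$ produces $\mathbb{P}\bigl(\lVert\mathbf{u}-\hat{\mathbf{u}}\rVert_\infty>t\bigr)\le 2n\exp(-8t^2c^2\theta^2 p)$, from which the sample-complexity corollary is immediate.

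The main obstacle is handling the plug-in errors, since $\hat{u}_i$ uses $\hat{\theta}$ and $\hat{c}$ rather than $\theta$ and $c$. I would argue these contributions are lower order: Lemma~\ref{lem:1} already controls $|\hat{\theta}-\theta|$ with only $\Omega(1)$ columns, and $\hat{c}^2$ is an average of $n$ row-level statistics and hence even more sharply concentrated, so after a Lipschitz/Taylor expansion of $(c,\theta)\mapsto\frac{1}{2c}(1-y/\theta)$ about the true values and a further union bound, the dominant term is the one above and the exponent is unchanged up to constants. A secondary point to verify is that $c\neq 0$ (assumed) keeps the inversion well-conditioned, and that the positive-root choice for $\hat{c}$ is consistent with the fixed sign convention on $\mathbf{u}$.
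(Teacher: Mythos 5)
Your proposal is correct and follows essentially the same route as the paper: invert $\mathbb{E}[Y_{ij}]=\theta(1-2cu_i)$ to get plug-in estimators for $\theta$, $c^2$, and the $u_i$ from column/row sums of $\mathbf{Y}$, then apply Hoeffding to each row sum and a union bound over the $n$ rows, treating the errors in $\hat{\theta}$ and $\hat{c}$ as lower-order exactly as the paper does. Your bookkeeping for the Hoeffding ranges — decomposing $p\bar{Y}_i-\mathbb{E}[p\bar{Y}_i]=\sum_{k,j}H_{ik}(X_{kj}-\theta)$ into $np$ terms and using $\sum_k H_{ik}^2=1$ — is in fact a cleaner justification of the bounded-differences sum than the paper's own computation, and it recovers the stated exponent $-8t^2c^2\theta^2p$ exactly.
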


\section{Algorithms}
The first algorithm pertains to zero error recovery of the Householder matrix $\mathbf{H}$ and the corresponding unit vector $\mathbf{u}$.We show that a brute-force elimination of possibilities on the columns of $\mathbf{X}$ (see Algorithm \ref{find_HX_exponential}) uniquely identifies the vector $\mathbf{u}$ (Theorem \ref{thm:1}). We show that any solution to the problem $\mathbf{Y}=\mathbf{H}\mathbf{X}$ is unique if it exists.
\begin{algorithm}[!ht]
    \caption{Finding $\mathbf{H}, \mathbf{X}$ for $\mathbf{Y=HX}$ with zero error}
    \label{find_HX_exponential}
    \vspace{2pt}
    \textbf{Input:} $\mathbf{Y}$ \\
    \textbf{Output:} $\mathbf{H}, \mathbf{X}$
    \vspace{10pt}
    \begin{algorithmic}[1]
        \While{all $n$ length binary vectors have not been exhausted}
        \State Set the first column of $\mathbf{X}$ as a random n length binary vector
        \State Find $\mathbf{u}$
        \EndWhile
        \vspace{4pt}
        \State Repeat the above for the second column of $\mathbf{X}$
        \vspace{4pt}
        \State If any of the $\mathbf{u}$ vectors obtained from a vector in the first column
        of $\mathbf{X}$ match with that of a $\mathbf{u}$ vector obtained from a random vector in the second column
        of $\mathbf{X}$, then set $\mathbf{H}$ as the corresponding Householder matrix, which is generated from $\mathbf{u}$
    \end{algorithmic}
\end{algorithm}
The rest of our algorithms operate by exploiting the correlation in the entries of $\mathbf{Y}= \mathbf{H}\mathbf{X}$ that manifest due to the Householder structure of the matrix $\mathbf{H}$ (see Algorithm \ref{find_V_X}). The proofs involve using suitable concentrations on the empirical correlations.
\begin{algorithm}[h]
    \caption{Finding $\theta$ for $\mathbf{Y=VX}$}
    \label{find_theta}
    \vspace{2pt}
    \textbf{Input:} $\mathbf{Y}$ \\
    \textbf{Output:} $\theta$
    \vspace{10pt}
    \begin{algorithmic}[1]
        \State Set $\theta = \frac{\sum_{i=1}^{n}\sum_{j=1}^{p} \mathbf{Y}_{ij}^2}{(n)(p)}$
        \vspace{10pt}
        \State This is the estimate for $\theta$
    \end{algorithmic}
\end{algorithm}

\begin{algorithm}[!ht]
    \caption{Finding $\mathbf{H, \ X}$ for $\mathbf{Y=HX}$ in polynomial time}
    \label{find_V_X}
    \vspace{2pt}
    \textbf{Input:} $\mathbf{Y}$ \\
    \textbf{Output:} $\mathbf{H, \ X}$
    \vspace{10pt}
    \begin{algorithmic}[1]
        \State Algorithm 1 gives $\theta$
        \State Set $(c^2) = \left(\frac{-1}{2}\right)\left(\frac{\sum_{i=1}^{n}\sum_{j=1}^{p} \mathbf{Y_{ij}}}{p\theta}-n\right) $
        \While{$i$ in rows of $\mathbf{Y}$}
        \vspace{4pt}
        \State set $k_i= \frac{-1}{2}\left(\frac{\sum_{j=1}^{p}
                \mathbf{(Y)_{ij}}}{\theta(1-2u_ic)}-1\right)$
        \EndWhile
        \While{$i$ in rows of $\mathbf{Y}$}
        \vspace{4pt}
        \State set $(u)_i= \frac{k_i}{ \sqrt{\sum_{j=1}^{n}
                    k_j}}$
        \EndWhile
        \State Set $\mathbf{H} = \mathbf{I-2uu^T}$
        \State Set $\mathbf{X'} = \mathbf{H}^T\mathbf{Y}$
        \State Set $\mathbf{X} =$ $HT_{\zeta}$($\mathbf{X'}$)
    \end{algorithmic}
\end{algorithm}

\noindent \textbf{Remark} In Algorithm \ref{find_V_X}, $HT_{\zeta}$ is the hard threshold operator, with $\zeta=0.5$. This value has been chosen heuristically. Furthermore, we are implicitly using the following: if $\mathbf{u}$ is a solution, then $-\mathbf{u}$ is also a solution, as both produce the same householder matrix.

\section{Simulations} 
Figure \ref{fig:inf_error} illustrates the $l_{\infty}$ error in $\mathbf{u}$ obtained from experimental results. The ground truth $\mathbf{H}$ and $\mathbf{X}$ matrices are generated arbitrarily as per the conditions specified above. The number of rows is set to $n=1000$ and the number of columns is varied from $2$ onwards. The $y$-axis denotes the $l_{\infty}$ error in $\mathbf{u}$, and the $x$-axis represents the number of columns in $\mathbf{Y}$. The plots are generated for two values of $\theta$, $0.1$ and $0.4$. The error decreases with an increase in the number of columns, as expected theoretically. The error when $\theta=0.1$ is slightly higher than the corresponding error for $\theta=0.4$, which is consistent with Theorem \ref{thm:2}.

\begin{figure}[h!]
    \centering
    \includegraphics[width=0.4\textwidth]{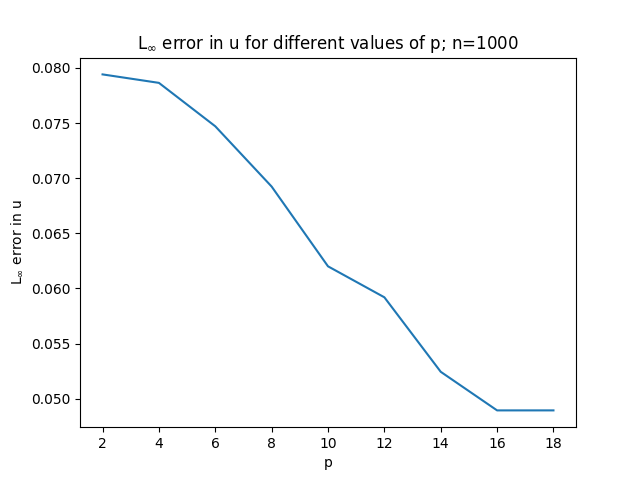} %
    \hfill
    \includegraphics[width=0.4\textwidth]{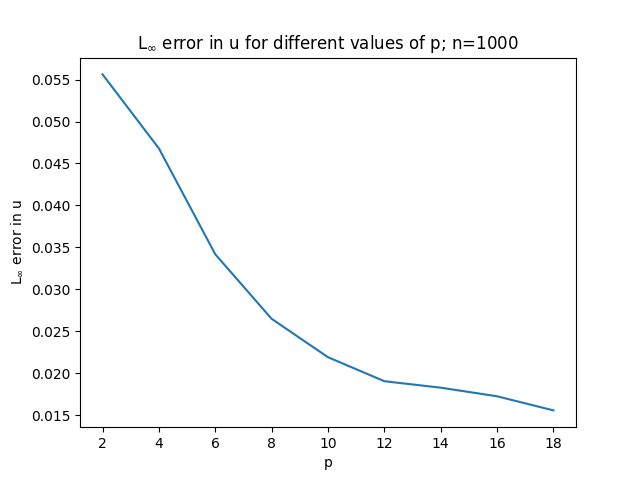} %
    \caption{Infinity norm error for varying number of columns ($\theta=0.1, \theta=0.4$)}
    \label{fig:inf_error}
\end{figure}

\newpage

\setcounter{section}{0}
\renewcommand{\thesection}{\Alph{section}}
\section{Appendix}

\subsection{Proof of Theorem 1}
\begin{proof}
    On trying all possible combinations of binary vectors for the columns of $\mathbf{X}$, we are effectively checking all cases.
    If the equations are consistent, the $\mathbf{u}$ vector obtained from the first column of $\mathbf{X}$ will match with the $\mathbf{u}$ vector obtained from the second column of $\mathbf{X}$.
    To identify the $\mathbf{u}$ vector, we need to solve the n simultaneous equations
    \begin{align*}
        \mathbf{Y_{ij}} & = \sum_{k=1}^{n} \mathbf{H_{ik}}\mathbf{X_{kj}}  \forall i \in [n]
    \end{align*}
    Using the definition for $\mathbf{H} \ (\mathbf{H_{ij}}=\delta_{ij}-2u_iu_j)$, we get
    \begin{equation}
        \begin{aligned}
            \mathbf{Y_{ij}} & = \sum_{k=1}^{n} \left(\delta_{ik}-2u_iu_k\right)\mathbf{X_{kj}} \ \forall i \in [n]
        \end{aligned}
        \label{eq:0}
    \end{equation}

    Here, $\delta_{ik}$ is the Kronecker delta function- its is $1$ for $i=k$ and $0$ otherwise. For any "guess" of the first column of $\mathbf{X}$, we can solve the above $n$ equations to get the $\mathbf{u}$ vector.
    We will get $2^n$ such $\mathbf{u}$ vectors for the first and the second column. If any of the $\mathbf{u}$ vectors obtained
    from the first column of $\mathbf{X}$ match with that of a $\mathbf{u}$ vector obtained from the second column of
    $\mathbf{X}$, then we have found the correct $\mathbf{u}$ vector. Our goal is to show that there will be atmost one vector
    $\mathbf{u}$ that works for both columns of $\mathbf{X}$. We prove this by contradiction. Assume that $(\mathbf{X_1}, \mathbf{u_1})$
    and $(\mathbf{X_2}, \mathbf{u_2})$ satisfy \eqref{eq:0} (let the corresponding Householder matrices be
    $(\mathbf{H_1}$ and $\mathbf{H_2})$ respectively). Define the sets (of non-supports) $S_{lj}=:\{k: \mathbf{X}_{lkj}=0\}$ for $j,l \in \{1,2\}$: this is the location of zeroes in the $j^{th}$ column of the $l^{th}$ solution.
    From \eqref{eq:0}, we have \(\sum_{ k \notin S_{1j}}(\delta_{ik}-2u_{1i}u_{1k}) = \sum_{k \notin S_{2j}} (\delta_{ik}-2u_{2i}u_{2k})  \).
    Define \(c_{lj} = \sum_{i \in S_{lj}} u_{li} \) for $l,j \in \{1,2\}$ as the sum of the entries in the $l^{th}$ solution for $\mathbf{u}$ at locations where the corresponding coefficient matrix is zero in the $j^{th}$ column. Likewise, let $c_{l} = \sum u_{li}$ be the sum of all entries in the $l^{th}$ solution for $\mathbf{u}$. (Thus, define $c_{s1}=c_{1j}$ and $c_{s2}=c_{2j}$).

    \begin{lemma}
        Following the notation from the previous paragraph, in case there are two solutions to \eqref{eq:0}, we can obtain the second solution from the first as
        \begin{align*}
            u_{2i} & = \frac{c_1-c_{s1}}{c_2-c_{s2}}u_{1i}  \ \ \ \hspace{73pt}  i \notin S_{1j},  i \notin S_{2j}                             \\
            u_{2i} & = \frac{c_1-c_{s1}}{c_2-c_{s2}}u_{1i} - \frac{1}{2}\left(\frac{1}{c_2-c_{s2}}\right)\quad i \notin S_{1j},\ i \in S_{2j}  \\
            u_{2i} & = \frac{c_1-c_{s1}}{c_2-c_{s2}}u_{1i} \ \ \ \hspace{73pt}  i \in S_{1j},\ i \in S_{2j}                                    \\
            u_{2i} & = \frac{c_1-c_{s1}}{c_2-c_{s2}}u_{1i} + \frac{1}{2}\left(\frac{1}{c_2-c_{s2}}\right) \quad i \in S_{1j},\ i \notin S_{2j}
        \end{align*}
        \label{lem:2}
    \end{lemma}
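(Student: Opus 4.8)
The plan is to derive all four identities directly from \eqref{eq:0} by elementary algebra, exploiting only that $\mathbf{X}$ is binary so that each column is supported on a set of indices and vanishes elsewhere; no probabilistic input is needed. Everything follows from equating the two available expressions for $\mathbf{Y}_{ij}$ and solving a single scalar linear equation for $u_{2i}$.

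First I would substitute $\mathbf{H}_{lik} = \delta_{ik} - 2u_{li}u_{lk}$ into $\mathbf{Y}_{ij} = \sum_k \mathbf{H}_{lik}\mathbf{X}_{lkj}$ and use $\mathbf{X}_{lkj} \in \{0,1\}$ to replace the sum over all $k$ by a sum over $k \notin S_{lj}$. The Kronecker-delta part then contributes the indicator $\mathbf{1}[i \notin S_{lj}]$, while the rank-one part factors as $2u_{li}\sum_{k \notin S_{lj}} u_{lk} = 2u_{li}(c_l - c_{sl})$, using $c_l = \sum_k u_{lk}$ and $c_{sl} = c_{lj} = \sum_{k \in S_{lj}} u_{lk}$. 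Hence for each $i$,
\[
\mathbf{1}[i \notin S_{1j}] - 2u_{1i}(c_1 - c_{s1}) \;=\; \mathbf{Y}_{ij} \;=\; \mathbf{1}[i \notin S_{2j}] - 2u_{2i}(c_2 - c_{s2}).
\]
Solving for $u_{2i}$ gives
\[
u_{2i} \;=\; \frac{c_1 - c_{s1}}{c_2 - c_{s2}}\,u_{1i} \;+\; \frac{\mathbf{1}[i \notin S_{2j}] - \mathbf{1}[i \notin S_{1j}]}{2(c_2 - c_{s2})},
\]
and the four cases of the lemma are precisely the four values of the indicator difference: it is $0$ when $i$ lies in both non-support sets or in neither, it is $+1$ when $i \in S_{1j}$ and $i \notin S_{2j}$, and it is $-1$ in the remaining case. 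Reading off each case reproduces the four displayed formulas.

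The only genuine subtlety is the division by $c_2 - c_{s2} = \sum_{k \notin S_{2j}} u_{2k}$, which must be nonzero for the expressions to make sense. If it vanished, the displayed identity would force the $j$-th column of $\mathbf{Y}$ to equal the $0/1$ indicator vector of $[n]\setminus S_{2j}$, a degenerate situation I would either exclude outright or carry along as a separate trivial case before stating the formulas; this is the same flavour of nondegeneracy as the assumption $c \neq 0$ in Theorem \ref{thm:2}. Apart from keeping the indicator bookkeeping straight, the argument is routine, so I anticipate no real obstacle.
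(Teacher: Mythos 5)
Your proposal is correct and follows essentially the same route as the paper: both substitute $\mathbf{H}_{lik}=\delta_{ik}-2u_{li}u_{lk}$ into $\mathbf{Y}_{ij}=\sum_k \mathbf{H}_{lik}\mathbf{X}_{lkj}$, use the binary support structure to reduce each side to $\mathbf{1}[i\notin S_{lj}]-2u_{li}(c_l-c_{sl})$, equate, and solve for $u_{2i}$ case by case (your single indicator-difference formula is just a compact packaging of the paper's four displayed equations). The paper likewise dismisses the division by $c_2-c_{s2}$ with an explicit nondegeneracy assumption, matching your closing remark.
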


    \begin{proof}
        Now, we have 4 cases:
        \begin{enumerate}
            \item $i \notin S_{1j}$ and $i \notin S_{2j}$
            \item $i \notin S_{1j}$ and $i \in S_{2j}$
            \item $i \in S_{1j}$ and $i \in S_{2j}$
            \item $i \in S_{1j}$ and $i \notin S_{2j}$
        \end{enumerate}
        Define the 4 sets as $P_1, P_2, P_3, P_4$ respectively. We will now show that the $u_{2i}$'s can be expressed in terms of the $u_{1i}$'s for each case.
        Here, $i \in [n]$ indictaes the $i^{th}$ entry of the $\mathbf{u}$ vector,
        corresponding to the $i^{th}$ row of $\mathbf{X}$. This leads
        to the following equality constraints:
        \begin{align*}
            \sum_{i \notin S_{1j}} \left(\delta_{ik}-2u_{1i}u_{1k}\right) & = \sum_{i \notin S_{2j}} \left(\delta_{ik}-2u_{2i}u_{2k}\right) \\
            \sum_{i \notin S_{1j}} \left(\delta_{ik}-2u_{1i}u_{1k}\right) & = \sum_{i \in S_{2j}} \left(\delta_{ik}-2u_{2i}u_{2k}\right)    \\
            \sum_{i \in S_{1j}} \left(\delta_{ik}-2u_{1i}u_{1k}\right)    & = \sum_{i \in S_{2j}} \left(\delta_{ik}-2u_{2i}u_{2k}\right)    \\
            \sum_{i \in S_{1j}} \left(\delta_{ik}-2u_{1i}u_{1k}\right)    & = \sum_{i \notin S_{2j}} \left(\delta_{ik}-2u_{2i}u_{2k}\right)
        \end{align*}
        (Note that the summations are over $k$). These can be simplified as:
        \begin{align*}
            1-2u_{1i}(c_1-c_{s1}) & = 1-2u_{2i}(c_2-c_{s2}) \ \ \ i \notin S_{1j} \ and \ i \notin S_{2j} \\
            1-2u_{1i}(c_1-c_{s1}) & = -2u_{2i}(c_2-c_{s2}) \ \ \ i \notin S_{1j} \ and \ i \in S_{2j}     \\
            -2u_{1i}(c_1-c_{s1})  & = -2u_{2i}(c_2-c_{s2}) \ \ \ i \in S_{1j} \ and \ i \in S_{2j}        \\
            -2u_{1i}(c_1-c_{s1})  & = 1-2u_{2i}(c_2-c_{s2}) \ \ \ i \in S_{1j} \ and \ i \notin S_{2j}
        \end{align*}
        On rearranging the terms, we get the required result. We assume that we do not divide by 0 in any case.
    \end{proof}

    \begin{lemma}
        Following the notation above,
        \begin{enumerate}
            \item $\lvert P_2 \rvert = \lvert P_4 \rvert$,
            \item        \( (c_2-c_{s2})/(c_1-c_{s1}) = c_1/c_2 \)
            \item \(u_{1i}/u_{2i} = c_1/c_2 \) for  $i\in P_1\cup P_3$
            \item \(\sum_{i \in P_2 \cup P_4} u_{1i}/\sum_{i \in P_2 \cup P_4} u_{2i}  = c_1/c_2 \)
        \end{enumerate}
        \label{lem:3}
    \end{lemma}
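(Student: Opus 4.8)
The plan is to combine two ingredients: orthogonality of $\mathbf{H_1}$ and $\mathbf{H_2}$, which pins down the \emph{cardinalities} $\lvert P_k\rvert$, and the explicit substitution formulas of Lemma \ref{lem:2}, which relate the \emph{sums} $\sum u_{li}$. To lighten notation, write $\alpha=(c_1-c_{s1})/(c_2-c_{s2})$ and $\beta=\tfrac12(c_2-c_{s2})^{-1}$, so that Lemma \ref{lem:2} reads $u_{2i}=\alpha u_{1i}$ on $P_1\cup P_3$, $u_{2i}=\alpha u_{1i}-\beta$ on $P_2$, and $u_{2i}=\alpha u_{1i}+\beta$ on $P_4$. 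Note that, directly from the definitions of the $P_k$, one has $S_{1j}=P_3\cup P_4$ and $S_{2j}=P_2\cup P_3$, hence $c_1-c_{s1}=\sum_{i\in P_1\cup P_2}u_{1i}$ and $c_2-c_{s2}=\sum_{i\in P_1\cup P_4}u_{2i}$.

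\textbf{Item 1.} I would prove $\lvert P_2\rvert=\lvert P_4\rvert$ by norm preservation. Let $\mathbf{x}_{1j},\mathbf{x}_{2j}$ be the $j$th columns of $\mathbf{X_1},\mathbf{X_2}$; since $\mathbf{H_1},\mathbf{H_2}$ are orthogonal and $\mathbf{H_1}\mathbf{x}_{1j}=\mathbf{H_2}\mathbf{x}_{2j}$ (both equal the $j$th column of $\mathbf{Y}$), we get $\lVert\mathbf{x}_{1j}\rVert_2^2=\lVert\mathbf{x}_{2j}\rVert_2^2$. As the columns are binary, $\lVert\mathbf{x}_{lj}\rVert_2^2$ is just the number of ones, namely $n-\lvert S_{lj}\rvert$. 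Hence $\lvert S_{1j}\rvert=\lvert S_{2j}\rvert$, i.e. $\lvert P_3\rvert+\lvert P_4\rvert=\lvert P_2\rvert+\lvert P_3\rvert$, which is the claim.

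\textbf{Items 2--4.} These follow by summing Lemma \ref{lem:2} over suitable index sets and using item 1 to cancel the constant terms. Summing over all $i\in[n]$: the $\alpha u_{1i}$ parts add to $\alpha c_1$ and the constants to $\beta(\lvert P_4\rvert-\lvert P_2\rvert)=0$, so $c_2=\alpha c_1=\frac{c_1-c_{s1}}{c_2-c_{s2}}c_1$; rearranging gives $\frac{c_2-c_{s2}}{c_1-c_{s1}}=\frac{c_1}{c_2}$ (item 2), equivalently $1/\alpha=c_1/c_2$. Item 3 is then immediate, since $u_{2i}=\alpha u_{1i}$ on $P_1\cup P_3$ gives $u_{1i}/u_{2i}=1/\alpha=c_1/c_2$ (more precisely $c_2 u_{1i}=c_1 u_{2i}$, valid even when $u_{2i}=0$). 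For item 4, summing over $P_2\cup P_4$ yields $\sum_{i\in P_2\cup P_4}u_{2i}=\alpha\sum_{i\in P_2\cup P_4}u_{1i}+\beta(\lvert P_4\rvert-\lvert P_2\rvert)=\alpha\sum_{i\in P_2\cup P_4}u_{1i}$, so the ratio is again $1/\alpha=c_1/c_2$.

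Essentially all the content sits in item 1 — the observation that orthogonality forces the two binary representations of each column of $\mathbf{Y}$ to have the same Hamming weight, whence $\lvert P_2\rvert=\lvert P_4\rvert$; once that is available, items 2--4 are one-line consequences of Lemma \ref{lem:2}. The only thing to watch is the degeneracy caveat already present in Lemma \ref{lem:2}: the denominators $c_1-c_{s1}$, $c_2-c_{s2}$, and $c_2$ must not vanish (vanishing would correspond to degenerate situations such as a column of $\mathbf{Y}$ being itself binary, or $\mathbf{u}\perp\mathbf{1}$ as already excluded via $c\neq0$ in Theorem \ref{thm:2}), so this bookkeeping — not any real difficulty — is what one must be careful about.
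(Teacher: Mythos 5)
Your proof is correct, and it takes a noticeably more direct route than the paper's. For item 1 you use norm preservation: since $\mathbf{H_1}\mathbf{x}_{1j}=\mathbf{H_2}\mathbf{x}_{2j}$ and both Householder matrices are orthogonal, the two binary columns must have equal Hamming weight, and with $S_{1j}=P_3\cup P_4$, $S_{2j}=P_2\cup P_3$ this gives $\lvert P_2\rvert=\lvert P_4\rvert$ in two lines. The paper instead reaches the same conclusion by expanding the unit-norm condition on $\mathbf{u_2}$ through the substitution formulas of Lemma \ref{lem:2} (its equation \eqref{eq:3}) and combining it with a chain of sum identities (\eqref{eq:4}--\eqref{eq:8}); it explicitly remarks afterwards that the column-wise $l_2$-norm argument would have yielded item 1 directly, so your shortcut is one the authors anticipated. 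For items 2--4 both you and the paper sum the Lemma \ref{lem:2} relations, but your bookkeeping is tighter: summing over all of $[n]$ gives $c_2=\alpha c_1$ (item 2, identical in substance to the paper's \eqref{eq:4} plus \eqref{eq:9}), item 3 is immediate in both treatments, and for item 4 your single summation over $P_2\cup P_4$, with the $\pm\beta$ terms cancelling by item 1, replaces the paper's longer detour through the partial sums \eqref{eq:5}, \eqref{eq:6}, \eqref{eq:7}. The trade-off is that the paper's route keeps all the intermediate identities (\eqref{eq:3}, \eqref{eq:7}) that it reuses later in the proof of Theorem \ref{thm:1} (e.g.\ to derive \eqref{eq:15}), whereas your argument proves the lemma itself more cleanly; your caveat about non-vanishing denominators matches the paper's own "we do not divide by 0" assumption, so nothing is missing.
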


    \begin{proof}
        First, we use the fact that any solution to $\mathbf{u}$ must be unit norm.
        \begin{equation}
            \begin{aligned}
                \sum_{i=1}^{n} u_{2i}^2 & = 1 \label{eq:2} \\
            \end{aligned}
        \end{equation}
        Substituting the equations from lemma \ref{lem:2} into \eqref{eq:2}, we get
        \begin{align*}
            \sum_{i \notin S_{1j} \ and \ i \notin S_{2j}} \left(\frac{c_1-c_{s1}}{c_2-c_{s2}}u_{1i}\right)^2 +
            \sum_{i \notin S_{1j} \ and \ i \in S_{2j}} \left(\frac{c_1-c_{s1}}{c_2-c_{s2}}u_{1i} - \frac{1}{2}\left(\frac{1}{c_2-c_{s2}}\right)\right)^2  +    \\
            \sum_{i \in S_{1j} \ and \ i \in S_{2j}} \left(\frac{c_1-c_{s1}}{c_2-c_{s2}}u_{1i}\right)^2                                                    +
            \sum_{i \in S_{1j} \ and \ i \notin S_{2j}} \left(\frac{c_1-c_{s1}}{c_2-c_{s2}}u_{1i} + \frac{1}{2}\left(\frac{1}{c_2-c_{s2}}\right)\right)^2 & = 1 \\
        \end{align*}
        On grouping the terms appropriately, we get
        \begin{align*}
            \left(\frac{c_1-c_{s1}}{c_2-c_{s2}}\right)^2 \left(\sum_{P_1 \cup P_2 \cup P_3 \cup P_4} u_{1i}^2\right) + \sum_{P_2 \cup P_4}\frac{1}{4}\left(\frac{1}{c_2-c_{s2}} \right)^2
            -                                                                                                                                     \\
            \sum_{P_2}\left(\frac{c_1-c_{s1}}{(c_2-c_{s2})^2}\right)u_{1i} + \sum_{P_4}\left(\frac{c_1-c_{s1}}{(c_2-c_{s2})^2}\right)u_{1i} & = 1
        \end{align*}
        On using the unit norm condition on $\mathbf{u_1}$:
        \begin{align*}
            \left(\frac{c_1-c_{s1}}{c_2-c_{s2}}\right)^2 + \sum_{P_2 \cup P_4}\frac{1}{4}\left(\frac{1}{c_2-c_{s2}} \right)^2
            - \sum_{P_2}\left(\frac{c_1-c_{s1}}{(c_2-c_{s2})^2}\right)u_{1i} + \sum_{P_4}\left(\frac{c_1-c_{s1}}{(c_2-c_{s2})^2}\right)u_{1i} & = 1
        \end{align*}
        On cross multiplying and expanding:
        \begin{align*}
            (c_1-c_{s1})^2 + \frac{\lvert P_2 \rvert + \lvert P_4 \rvert}{4}
            - \sum_{P_2}\left(c_1-c_{s1}\right)u_{1i} + \sum_{P_4}\left(c_1-c_{s1}\right)u_{1i} & = (c_2-c_{s2})^2
        \end{align*}
        Which on simplifying leads to:
        \begin{equation}
            \begin{aligned}
                (c_2-c_{s2})^2 - (c_1-c_{s1})^2 & = \frac{\lvert P_2 \rvert + \lvert P_4 \rvert}{4}-
                \sum_{P_2}\left(c_1-c_{s1}\right)u_{1i} + \sum_{P_4}\left(c_1-c_{s1}\right)u_{1i} \label{eq:3}
            \end{aligned}
        \end{equation}

        \begin{enumerate}
            \item \begin{proof}
                      Consider the equation obtained from the sums of entries of the
                      $\mathbf{u}$ vectors.
                      \begin{align*}
                          \sum_{i=1}^{n} u_{2i} & = c_2
                      \end{align*}
                      Plugging in the expressions from Lemma \ref{lem:2}, we get
                      \begin{align*}
                          \sum_{i \notin S_{1j} \ and \ i \notin S_{2j}} \left(\frac{c_1-c_{s1}}{c_2-c_{s2}}u_{1i}\right) +
                          \sum_{i \notin S_{1j} \ and \ i \in S_{2j}} \left(\frac{c_1-c_{s1}}{c_2-c_{s2}}u_{1i} - \frac{1}{2}\left(\frac{1}{c_2-c_{s2}}\right)\right)  +      \\
                          \sum_{i \in S_{1j} \ and \ i \in S_{2j}} \left(\frac{c_1-c_{s1}}{c_2-c_{s2}}u_{1i}\right)                                                    +
                          \sum_{i \in S_{1j} \ and \ i \notin S_{2j}} \left(\frac{c_1-c_{s1}}{c_2-c_{s2}}u_{1i} + \frac{1}{2}\left(\frac{1}{c_2-c_{s2}}\right)\right) & = c_2 \\
                      \end{align*}
                      Thus,
                      \begin{align*}
                          \left(\frac{c_1-c_{s1}}{c_2-c_{s2}}\right) \left(\sum_{P_1 \cup P_2 \cup P_3 \cup P_4} u_{1i}\right)
                          - \sum_{P_2}\frac{1}{2}\left(\frac{1}{c_2-c_{s2}}\right) + \sum_{P_4}\frac{1}{2}\left(\frac{1}{c_2-c_{s2}}\right) & = c_2
                      \end{align*}
                      This gives
                      \begin{align*}
                          c_1\left(\frac{c_1-c_{s1}}{c_2-c_{s2}}\right)
                          - \sum_{P_2}\frac{1}{2}\left(\frac{1}{c_2-c_{s2}}\right) + \sum_{P_4}\frac{1}{2}\left(\frac{1}{c_2-c_{s2}}\right) & = c_2
                      \end{align*}
                      On cross multiplying, we get
                      \begin{align*}
                          c_1\left(c_1-c_{s1} \right)
                          - \sum_{P_2}\frac{1}{2} + \sum_{P_4}\frac{1}{2} & = c_2(c_2-c_{s2})
                      \end{align*}
                      On simplifying, we have,
                      \begin{equation}
                          \begin{aligned}
                              \implies - \sum_{P_2}\frac{1}{2} +
                              \sum_{P_4}\frac{1}{2} & = c_2(c_2-c_{s2}) - c_1\left(c_1-c_{s1}\right) \label{eq:4}
                          \end{aligned}
                      \end{equation}
                      We then consider sums over $P2$ and $P3$:

                      \begin{align*}
                          \sum_{P2 \cup P3} u_{2i} = c_{s2}
                      \end{align*}
                      This gives
                      \begin{align*}
                          \sum_{P2}u_{2i} + \sum_{P3}u_{2i} & = c_{s2}
                      \end{align*}
                      Substituting the expressions from Lemma \ref{lem:2}, we get
                      \begin{align*}
                          \sum_{P2 \cup P3} \left(\frac{c_1-c_{s1}}{c_2-c_{s2}}\right)u_{1i}
                          - \sum_{P_2}\frac{1}{2}\left(\frac{1}{c_2-c_{s2}}\right) & = c_{s2}
                      \end{align*}
                      On cross-multiplying and rearranging,
                      \begin{equation}
                          \begin{aligned}
                              \sum_{P2 \cup P3} \left(c_1-c_{s1}\right)u_{1i}
                              - \sum_{P_2}\frac{1}{2} & = c_{s2}(c_2-c_{s2}) \label{eq:5}
                          \end{aligned}
                      \end{equation}
                      Correpondingly, consider the sums over $P3$ and $P4$:
                      \begin{align*}
                          \sum_{P3 \cup P4} u_{1i} = c_{s1}
                      \end{align*}
                      Decomposing similar to the previous case,
                      \begin{align*}
                          \sum_{P2}u_{1i} + \sum_{P3}u_{1i} = c_{s1}
                      \end{align*}
                      Substituting the expressions from Lemma \ref{lem:2},
                      \begin{align*}
                          \sum_{P3 \cup P4} \left(\frac{c_2-c_{s2}}{c_1-c_{s1}}\right)u_{2i}
                          - \sum_{P_4}\frac{1}{2}\left(\frac{1}{c_1-c_{s1}}\right) & = c_{s1}
                      \end{align*}
                      Simplifying and rearranging,
                      \begin{equation}
                          \begin{aligned}
                              \sum_{P3 \cup P4} \left(c_2-c_{s2}\right)u_{2i}
                              - \sum_{P_4}\frac{1}{2} & = c_{s1}(c_1-c_{s1}) \label{eq:6}
                          \end{aligned}
                      \end{equation}

                      Thus, we get (from $\ref{eq:5} \ - \ \ref{eq:6}$):
                      \begin{equation}
                          \begin{aligned}
                              c_{s2}(c_2-c_{s2}) - c_{s1}\left(c_1-c_{s1}\right) & =  \sum_{P2} (c_1-c_{s1})u_{1i}
                              -  \sum_{P4} (c_2-c_{s2})u_{2i} - \sum_{P_2}\frac{1}{2} +
                              \sum_{P_4}\frac{1}{2} \label{eq:7}
                          \end{aligned}
                      \end{equation}

                      On subtracting from the previous equation ($\ref{eq:4} \ - \ \ref{eq:7}$):
                      \begin{align*}
                          c_2(c_2-c_{s2}) - c_1\left(c_1-c_{s1}\right)-(c_{s2}(c_2-c_{s2}) - c_{s1}\left(c_1-c_{s1}\right))
                           & = - \left(\sum_{P2} (c_1-c_{s1})u_{1i} -  \sum_{P4} (c_2-c_{s2})u_{2i}\right)
                      \end{align*}
                      This gives us:
                      \begin{equation}
                          \begin{aligned}
                              (c_2-c_{s2})^2-(c_1-c_{s1})^2 & = - \sum_{P2} (c_1-c_{s1})u_{1i} +  \sum_{P4} (c_2-c_{s2})u_{2i} \label{eq:8}
                          \end{aligned}
                      \end{equation}

                      Comparing with $\ref{eq:3}$, we get:
                      \begin{align*}
                          \frac{\lvert P_2 \rvert + \lvert P_4 \rvert}{4} & = \sum_{P4} (c_2-c_{s2})u_{2i} - \sum_{P4} (c_1-c_{s1})u_{1i}
                      \end{align*}

                      \begin{align*}
                          \implies \lvert P_2 \rvert + \lvert P_4 \rvert & = 2\sum_{P4} -2u_{1i}(c_1-c_{s1})
                          + 2u_{2i}(c_2-c_{s2})                                                                                   \\
                          \implies \lvert P_2 \rvert + \lvert P_4 \rvert & = 2\sum_{P4} 1-2u_{2i}(c_2-c_{s2})+2u_{2i}(c_2-c_{s2}) \\
                          \implies \lvert P_2 \rvert + \lvert P_4 \rvert & = 2\sum_{P4} 1
                      \end{align*}
                      Thus, we have
                      \begin{equation}
                          \begin{aligned}
                              \lvert P_2 \rvert - \lvert P_4 \rvert & =  0 \label{eq:9}
                          \end{aligned}
                      \end{equation}

                      And this concludes the proof of the first part of the lemma. Note that this could have been obtained directly
                      by taking the $l_2$ norm (column wise) on both sides of the equation $\mathbf{H_1X_1=H_2X_2}$. However, the following results
                      are not as straightforward to obtain without the analysis of the underlying structure of the problem.
                  \end{proof}

            \item \begin{proof}
                      Using $\ref{eq:9} \ $ in $ \ \ref{eq:4}$, we get:
                      \begin{align*}
                          0 & = c_2(c_2-c_{s2}) - c_1\left(c_1-c_{s1}\right)
                      \end{align*}
                      On rearranging the terms, we get
                      \begin{equation}
                          \begin{aligned}
                              \frac{c_1}{c_2} & = \frac{c_2-c_{s2}}{c_1-c_{s1}} \label{eq:10}
                          \end{aligned}
                      \end{equation}

                  \end{proof}

            \item \begin{proof}
                      Using lemma $\ref{lem:2}$, and substituting the result 2. from lemma $\ref{lem:3}$, we get:
                      \begin{equation}
                          \begin{aligned}
                              \frac{u_{1i}}{u_{2i}} & = \frac{c_1}{c_2} \ \ \ \text{for}  \ \ \ P_1 \cup P_3 \label{eq:13}
                          \end{aligned}
                      \end{equation}
                  \end{proof}

            \item \begin{proof}
                      Using $\ref{eq:9} \ $ in $ \ \ref{eq:7}$, we get:
                      \begin{align*}
                          c_{s2}(c_2-c_{s2}) - c_{s1}\left(c_1-c_{s1}\right) & =  \sum_{P2} (c_1-c_{s1})u_{1i} - \sum_{P4} (c_2-c_{s2})u_{2i}
                      \end{align*}
                      Dividing the equation by $c_1-c_{s1}$, we get
                      \begin{align*}
                          c_{s2}\frac{c_2-c_{s2}}{c_1-c_{s1}}-c_{s1} & =  \sum_{P2} u_{1i} - \sum_{P4} \frac{c_2-c_{s2}}{c_1-c_{s1}}u_{2i}
                      \end{align*}
                      Using result 2. from lemma $\ref{lem:3}$, we get:
                      \begin{align*}
                          c_{s2}\frac{c_1}{c_2}-c_{s1} & = \sum_{P2} u_{1i} - \sum_{P4} \frac{c_1}{c_2}u_{2i}
                      \end{align*}
                      Thus, we have:
                      \begin{align*}
                          \frac{c_1}{c_2}\sum_{P2 \cup P3 \cup P4} u_{2i} & = \sum_{P2 \cup P3 \cup P4} u_{1i}
                      \end{align*}
                      Or equivalently,
                      \begin{equation}
                          \begin{aligned}
                              \frac{c_1}{c_2} & = \frac{\sum_{P2 \cup P3 \cup P4} u_{1i}}{\sum_{P2 \cup P3 \cup P4} u_{2i}} \label{eq:11}
                          \end{aligned}
                      \end{equation}

                      Using $\ref{eq:13} \ $ in $ \ \ref{eq:11}$:
                      \begin{align*}
                          \sum_{P2 \cup P3 \cup P4} u_{1i} & = \sum_{P2 \cup P3 \cup P4} u_{2i} \frac{c_1}{c_2} \\
                      \end{align*}
                      Separating the terms, we get:
                      \begin{align*}
                          \implies \sum_{P2 \cup P4} u_{1i} + \sum_{P3} u_{1i} & = \sum_{P2 \cup P4} u_{2i}\frac{c_1}{c_2} + \sum_{P3} u_{2i} \frac{c_1}{c_2} \\
                      \end{align*}
                      Using result 3. from lemma $\ref{lem:3}$, we get:
                      \begin{align*}
                          \sum_{P2 \cup P4} u_{1i} & = \sum_{P2 \cup P4} u_{2i} \frac{c_1}{c_2}
                      \end{align*}
                      On simplifying, we have the result:
                      \begin{equation}
                          \begin{aligned}
                              \frac{\sum_{P2 \cup P4} u_{1i}}{\sum_{P2 \cup P4} u_{2i}} & = \frac{c_1}{c_2}
                          \end{aligned}
                      \end{equation}
                  \end{proof}
        \end{enumerate}

    \end{proof}

    Thus, the number of row indices for which the index doesn't belong to $S_{1j}$ but
    belongs to $S_{2j}$ is equal to the number of row indices for which the index belongs to $S_{1j}$ but
    doesn't belong to $S_{2j}$. What does the index not belonging to
    $S_{1j}$ but belonging to $S_{2j}$ mean? It means that the corresponding entry in the $\mathbf{X_1}$ matrix is non-zero for the column
    under consideration (i.e., it is 1) whilst the corresponding entries in the $\mathbf{X_2}$ matrix are zero. Similarly, the index belonging to
    $S_{1j}$ but not belonging to $S_{2j}$ means that the corresponding entry in the $\mathbf{X_2}$ matrix is non-zero for the column
    under consideration (i.e., it is 1) whilst the corresponding entries in the $\mathbf{X_1}$ matrix are zero. Thus, the number of non-zero entries
    are the same in both matrices $\mathbf{X_1}$ and $\mathbf{X_2}$ for a given column
    (since the rest of the entries are either both 1 or both 0). Thus, the two matrices are only
    different due to permutation. \\[10pt]
    Now, we consider multiple columns. We know that every
    column of $\mathbf{X}$ has a permutation of 0's and 1's corresponding to the
    ground truth $\mathbf{u}$ vector. The corresponding column vector in
    $\mathbf{Y}$ is different for different columns. However, the ground truth $\mathbf{u}$ vector
    remains the same throughout. Thus, the $\mathbf{u}$ vector generated from one of the permutations
    will certainly match that generated in the previous columns. An error is caused
    when the same "incorrect" $\mathbf{u}$ vector is generated for every column. \\
    Consider Equation $\ref{eq:3}$. Say we know the ground truth $\mathbf{u}$ vector. From $\ref{eq:3} \ $, we have:

    \begin{align*}
        (c_2-c_{s2})^2 - (c_1-c_{s1})^2 & = \frac{\lvert P_2 \rvert + \lvert P_4 \rvert}{4}-
        \sum_{P_2}\left(c_1-c_{s1}\right)u_{1i} + \sum_{P_4}\left(c_1-c_{s1}\right)u_{1i}
    \end{align*}
    Thus,
    \begin{equation}
        \begin{aligned}
            (c_2-c_{s2}) & = \pm \sqrt{(c_1-c_{s1})^2+(c_1-c_{s1})\left(\sum_{P_4}u_{1i} -
            \sum_{P_2}u_{1i}\right)+\frac{\lvert P_2 \rvert + \lvert P_4 \rvert}{4}} \label{eq:15}
        \end{aligned}
    \end{equation}

    For this value to always be defined, the term inside the square root must always
    be non-negative. Thus,
    \begin{align*}
        (c_1-c_{s1})^2+(c_1-c_{s1})\left(\sum_{P_4}u_{1i} -
        \sum_{P_2}u_{1i}\right)+\frac{\lvert P_2 \rvert + \lvert P_4 \rvert}{4} \geq 0 \ \forall \mathbf{u_1}
    \end{align*}

    Hence, the discriminant of the above quadratic must always be non-positive. Let's analyze the discriminant:
    \begin{align*}
        \Delta & = \left(\sum_{P_4}u_{1i} - \sum_{P_2}u_{1i}\right)^2 - 4\left(\frac{\lvert P_2 \rvert + \lvert P_4 \rvert}{4}\right) \\
    \end{align*}
    Or,
    \begin{align*}
        \Delta & = \left(\sum_{P_4}u_{1i} - \sum_{P_2}u_{1i}\right)^2 - \lvert P_2 \rvert - \lvert P_4 \rvert
    \end{align*}
    Since
    \begin{align*}
        \max \left(\sum_{P_4}u_{1i} - \sum_{P_2}u_{1i}\right) & = \sqrt{\lvert P_2 \rvert + \lvert P_4 \rvert}
    \end{align*}

    \begin{align*}
        \Delta & \leq 0 \ \forall \mathbf{u_1}
    \end{align*}

    Thus, when given a $\mathbf{u_1}$ vector, the corresponding possible $\mathbf{u_2}$ vector can be found.
    Only 2 of these can exist for a given ($\mathbf{u_1}$ vector, column vector of $\mathbf{X_1}$) pair and only one can exist upto sign.
    Now, with respect to the ground truth column vector $\mathbf{X_1}$, which contains say $k$ ones, we want
    to find how many other $\mathbf{u_2}$, $\mathbf{X_2}$ pairs can give the same column vector in $\mathbf{Y}$, upto
    sign. From $\ref{eq:15}$, we conclude that for a given $\mathbf{u_1}$ vector, the $\mathbf{u_2}$ vector depends on $\lvert P_4 \rvert$
    (or equivalently $\lvert P_2 \rvert$) only. This number can vary from $0$ to at most $k$. $k$ must be less than
    or equal to $\frac{n}{2}$, since at most half the entries can be of the same kind whilst still
    ensuring $\lvert P_2 \rvert = \lvert P_4 \rvert$. Thus, the total number of
    $\mathbf{H}, \mathbf{X}$ pairs that can give the same $\mathbf{Y}$ from the first column is at most $k=O(n)$
    (subject to sign and permutation). However, we argue that this does not affect the number of columns
    needed to recover the ground truth $\mathbf{u}$ vector. \\
    To do so, we now consider the
    second column. We need to check if the same $\mathbf{u_2}$ vector obtained from the first column can satisfy the equations
    corresponding to the second column. Note that there is a conflict
    iff a $\mathbf{u_2}$ vector generated from the second column is
    exactly the same as that generated from the first column. This is because,
    when we solve the equations to find the ground truth $\mathbf{u_1}$ vector, the result is exactly
    the same from both columns, since this is how the matrix $\mathbf{Y}$ was generated.
    There is no variation even in permutation or sign. Thus, we check if this is possible. Assume that the
    $\mathbf{u_2}$ vector generated from the second column is the same as that generated from the first column. This implies that both
    $c_1$ and $c_2$ are the same.

    Consider the lemma $\ref{lem:2}$.

    Using $\ref{eq:10} \ $ and $ \ \ref{eq:13}$, the equations in lemma $\ref{lem:2}$ are equal to:

    \begin{align*}
        u_{1i} & = u_{2i}\frac{c_1}{c_2} \ \ \ i \in P_1 \cup P_3                 \\
        u_{1i} & = \frac{1}{2(c_1-c_{s1})}+u_{2i}\frac{c_1}{c_2} \ \ \ i \in P_2  \\
        u_{1i} & = -\frac{1}{2(c_1-c_{s1})}+u_{2i}\frac{c_1}{c_2} \ \ \ i \in P_4 \\
    \end{align*}

    Note that since $\mathbf{u_1}$ is the same, $c_1$ is the same in the equations above as
    that from the first column. Thus, if $\mathbf{u_2}$ has to be the same
    , then $c_1-c_{s1}$ must be the same for the ground $\mathbf{u_1}$ vector generated from the first column.
    Thus, $c_{s1}$ must also be the same. Not only this, but $\lvert P_2 \rvert$ and
    $\lvert P_4 \rvert$ must be the same because there is a one-to-one relationship
    between the known ground truth $\mathbf{u_1}$ vector and the $\mathbf{u_2}$ vector generated from any column.
    If $\lvert P_2 \rvert $ is different, the number of entries which are calculated using the above equation
    for $i \in P_2$ will change. Thus, the $\mathbf{u_2}$ vector generated will be different. The only way it won't change is
    if $\frac{1}{c_1-c_{s1}}$ is $0$. But this is not possible.
    Therefore, the only way of getting an identical $\mathbf{u_2}$ vector is if the ground truth $\mathbf{X}$ column vector for the
    first and second column are identical. But this is a contradiction, since we assume them to be different. Thus, we will
    be able to recover the ground truth $\mathbf{u_1}$ vector uniquely, by using only 2 columns.

\end{proof}

\subsection{Proof of Theorem 2}

\begin{proof}
    We use a proof by counter-example here. We will show that there exists a pair of matrices,
    $(\mathbf{H_1}, \mathbf{X_1})$ and another pair $(\mathbf{H_2}, \mathbf{X_2})$, such that
    $\mathbf{H_1}\mathbf{X_1} = \mathbf{H_2}\mathbf{X_2}$. Consider the householder vectors
    \begin{align*}
        \mathbf{u_1} & = \begin{bmatrix}
                             \sqrt{\frac{1}{3}} \\
                             \sqrt{\frac{2}{3}} \\
                         \end{bmatrix}    \\
        \mathbf{u_2} & = \begin{bmatrix}
                             \frac{1}{\sqrt{2}} \\
                             \frac{1}{\sqrt{2}}
                         \end{bmatrix} \\
    \end{align*}
    The corresponding householder matrices are
    \begin{align*}
        \mathbf{H_1} & = \begin{bmatrix}
                             \frac{1}{3}          & \frac{-2\sqrt{2}}{3} \\
                             \frac{-2\sqrt{2}}{3} & \frac{-1}{3}
                         \end{bmatrix} \\
        \mathbf{H_2} & = \begin{bmatrix}
                             0  & -1 \\
                             -1 & 0
                         \end{bmatrix}                                             \\
    \end{align*}
    Consider the $p^{th}$ column vector of $\mathbf{X_1}$ and $\mathbf{X_2}$ as
    \begin{align*}
        \mathbf{X_{1p}} & = \begin{bmatrix}
                                x_{11p} \\
                                x_{12p}
                            \end{bmatrix} \\
        \mathbf{X_{2p}} & = \begin{bmatrix}
                                x_{21p} \\
                                x_{22p}
                            \end{bmatrix} \\
    \end{align*}
    Thus, the corresponding column vectors of the $\mathbf{Y}$ matrices are
    \begin{align*}
        \mathbf{Y_{1p}} & = \begin{bmatrix}
                                \frac{x_{11p}-2\sqrt{2}x_{12p}}{3} \\
                                \frac{-x_{12p}-2\sqrt{2}x_{11p}}{3}
                            \end{bmatrix} \\
        \mathbf{Y_{2p}} & = \begin{bmatrix}
                                -x_{22p} \\
                                -x_{21p}
                            \end{bmatrix}                     \\
    \end{align*}
    We need $\mathbf{Y_{1p}} = \mathbf{Y_{2p}}$. Thus, we need
    \begin{align*}
        \frac{x_{11p}-2\sqrt{2}x_{12p}}{3}  & = -x_{22p} \\
        \frac{-x_{12p}-2\sqrt{2}x_{11p}}{3} & = -x_{21p} \\
    \end{align*}
    This can be done for every column $p$. For example, a
    satisfying assignment would be:

    \begin{align*}
        \mathbf{X_{1p}} & = \begin{bmatrix}
                                \frac{2\sqrt2}{3} \\
                                \frac{1}{3}
                            \end{bmatrix}
    \end{align*}

    \begin{align*}
        \mathbf{X_{2p}} & = \begin{bmatrix}
                                1 \\
                                0
                            \end{bmatrix}
    \end{align*}

    Thus, the solution is not unique upto sign and permutation. Note that even for
    Theorem 1, multiple $\mathbf{H}, \mathbf{X}$ pairs could give the
    same $\mathbf{Y}$ for a given column. However, we rely on the fact that entries of
    the column vectors of $\mathbf{X}$ are constrained to be 0 or 1 only.
    This would ensure that upto permutation, only $O(n)$ such possibilities existed.
    However, without such a restriction here, infinite solutions are possible and thus recovery
    is not possible.

\end{proof}

\subsection{Proof of Lemma 1}

\begin{proof}
    Consider the matrices $\mathbf{Y} \in \mathbf{R^{n \times p}}$,
$\mathbf{V} \in \mathbf{R^{n \times n}}$, \ $\mathbf{X} \in \mathbf{R^{n \times p}}$,
such that $\mathbf{Y}=\mathbf{V}\mathbf{X}$

\begin{align*}
    \mathbf{Y_{ij}}                  & =\sum_{k=1}^{n} \mathbf{V_{ik}} \mathbf{X_{kj}}              \\
    \mathbb{E}[ \mathbf{X_{ij}} ]    & = \theta                                                     
\end{align*}
\begin{align*}
    \mathbf{Y^TY=(VX)^T(VX)}         & = \mathbf{X^TV^TVX} = \mathbf{X^TX} 
\end{align*}
\begin{align*}
    trace(\mathbf{Y^TY})             & = \sum_{i=1}^{n}\sum_{j=1}^{p} \mathbf{Y_{ij}}^2 =            \sum_{i=1}^{n}\sum_{j=1}^{p} \mathbf{X_{ij}}^2             \\
    \mathbb{E}(trace(\mathbf{Y^TY})) & = \sum_{i=1}^{n}\sum_{j=1}^{p} \mathbb{E}[\mathbf{X_{ij}}^2] \\
                                     & = \sum_{i=1}^{n}\sum_{j=1}^{p} \theta                        \\
                                     & = np\theta                                                   \\
\end{align*}

\noindent We set the expected value of the above sum of products as the actual value
of the above sum of products, and prove these values are close to each other, by using Hoeffding's inequality \cite{hoeffding1994probability}.

Define new random variables $Z_m$ as $\mathbf{X_{ij}}^2/{np}$ for $m=1,2,...,np$. Note that for Bernoulli random variables,
$\mathbf{X_{ij}}^2=\mathbf{X_{ij}}$. Since the $\mathbf{X_{ij}}$ are independent, the $Z_m$ are also independent. Thus, we can apply Hoeffding's inequality to the sum of these random variables. 
Define $S = \sum_{m=1}^{np} Z_m$. Given each $Z_m$ is bounded as follows: $a_m \leq Z_m \leq b_m$: from Hoeffding's inequality, we have:
\begin{align*}
    \mathbb{P} \left [\lvert  S -\mathbb{E}[S] > t \rvert \right ] \leq 2exp\left(-2t^2 / \sum_{m=1}^{np} (a_m-b_m)^2\right)
\end{align*}

Here, $a_m=0$ and $b_m=1/{np}$. Thus, $\sum_{m=1}^{np} (a_m-b_m)^2 = \sum_{m=1}^{np} 1/{np}^2 = 1/{np}$.
Finally, according to Hoeffding's inequality, :

$$
    \mathbb{P}\left[ \left \lvert \frac{\sum_{i=1}^{n}\sum_{j=1}^{p} \mathbf{Y_{ij}}^2}
        {(n)(p)} - \theta \right \rvert > t \right]
    \leq 2 \exp \left(-2t^2 \ (n)(p) \right)
$$
\end{proof}

\subsection{Proof of Theorem 3}

\begin{proof}
    Consider the matrices $\mathbf{Y} \in \mathbf{R^{n \times p}}$,
$\mathbf{H} \in \mathbf{R^{n \times n}}$, \ $\mathbf{X} \in \mathbf{R^{n \times p}}$,
such that $\mathbf{Y}=\mathbf{H}\mathbf{X}$

\begin{align*}
    \mathbf{Y_{ij}}                  & =\sum_{k=1}^{n} \mathbf{H_{ik}} \mathbf{X_{kj}}              \\
    &= \sum_{k=1}^{n}(\delta_{ik}-2u_iu_k)\mathbf{X_{kj}} 
\end{align*}

    Note that
\begin{align*}
        \sum_{i=1}^{n}\sum_{j=1}^{p}\mathbb{E}(\mathbf{Y_{ij}}) & = \sum_{i=1}^{n}\sum_{j=1}^{p}\sum_{k=1}^{n}\mathbb{E}((\delta_{ik}-2u_iu_k)\mathbf{X_{kj}}) \\
        &= \sum_{i=1}^{n}\sum_{k=1}^{n} p \theta (\delta_{ik}-2u_iu_k) \\
        &= p\theta \sum_{i=1}^{n}(1-2u_ic) \\
        &= p\theta(n-2c^2)
    \end{align*}

where $c=\sum_{i=1}^{n} u_i$. Now, define
\[
 k_i = \frac{-1}{2}\left( \frac{\sum_{j = 1 }^{p} \mathbf{(Y)_{ij}}}{p\theta} - 1 \right)
= u_{i}\left( \sum_{z=1}^{n} u_z \right),     
\]
From this, we can take the ratio of terms to get 
\[\frac{k_z}{k_i} = \frac{u_z\left( \sum_{z=1}^{n} u_z \right)}{u_i\left( \sum_{z=1}^{n} u_z \right)}= \frac{u_z}{u_i}\] 
Thus, we have \[k_i = u_{i}\left( \sum_{z=1}^{n} u_i \frac{k_z}{k_i} \right) \]. Thus, 
\[k_i^2  = u_{i}^2\left( \sum_{z=1}^{n} k_z \right)\]
 which gives \[u_{i}  = \frac{k_i}{\sqrt{\sum_{z=1}^{n} k_z}}\]
        
For each $u_{1i}$, we take the sign of each $u_{1i}$ to be the same as that of $k_i$. 
        The computational complexity involved in the calculation of $\mathbf{u}$ is $O(np)$. We get the desired result by applying Hoeffding's inequality, once for the calculation of $c$ and once for the calculation of the $u_i$'s. \\
        Note that we find the variation from $c/\sqrt{n}$ and use this in finding a bound for the overall 
        error. 
        Hoeffding's Inequality for $c$ calculation:
        Define $Z_m$ as $\mathbf((-1/2)({Y_{ij}}-\theta)/{np\theta})$ for $m=1,2,...,np$. This can be written as 
        \[ Z_m = \frac{-1}{2}\left ( \frac{\left ( \sum_{k=1}^{n}(\delta_{ik}-2u_iu_k)\mathbf{X_{kj}} \right  )-\theta}{np\theta} \right ) \]
        Define $S = \sum_{m=1}^{np} Z_m$ 
        \[\mathbb{E}[S] = \mathbb{E} \left [\sum_{m=1}^{np}Z_m \right ] = \frac{c^2}{n} = \left (\frac{c}{\sqrt{n}} \right )^2  \]
       Again, similar to lemma \ref{lem:1}, we have: $a_m = 1/{2np}$ and $b_m = (-1/2)(1-2u_ic)+1/{2np}$. Thus, $a_m-b_m=(-1/2)(1-2u_ic)$. Finally
       $\sum_{m=1}^{np}(a_m-b_m)^2=\sum_{m=1}^{np}(1/4)(1-4u_ic+4u_i^2c^2)=np/4$. Hence, according to Hoeffding's inequality:
        \begin{align*}
    \mathbb{P}\left[ \left\lvert \left(\left(\frac{\sum_{i=1}^{n} \sum_{j=1}^{p} \mathbf{Y_{ij}}}{np\theta}\right)-1\right)\frac{-1}{2} - \left(\frac{c}{\sqrt{n}}\right)^2 \right\rvert > t \right] \leq 2 \exp\left( -8t^2  \theta^2 np \right) 
\end{align*}

Finally, we analyze the calculation of the $u_i$'s.
We can either set the $u_i$'s based on Algorithm \ref{find_V_X} or we can directly set the $u_i$'s based on the following:

Note that both the method explained in Algorithm \ref{find_V_X} and what we have described here are equivalent.
They represent two different ways of expressing the same quantity. \\
Set \[u_i= \left (\frac{\sum_{j=1}^{p} {\mathbf{Y_{ij}}}}{p\theta}-1 \right ) \left(\frac{-1}{2c} \right) \]
Set $c' = c/\sqrt{n}$. Then we have:
\[u_i= \left (\frac{\sum_{j=1}^{p} {\mathbf{Y_{ij}}}}{p\theta}-1 \right )\left(\frac{-1}{2c'\sqrt{n}}\right)=y\left(\frac{-1}{2c'\sqrt{n}}\right)\]
As a consequence, the error in any given \(u_i\) has two components - due to the error in $c'$ and the deviation in the empirical correlation above. We bound this by taking the union bound over the probability of error in each case. For \(p=\Omega(1)\), the error in \(c'\) is low. Thus, the governing term is the error in the first term.
Just like the previous proofs, we once again define a random variable $Z_m = (-1/2c)(\mathbf{Y_{ij}}-\theta)/{p\theta}$. Effectively,
\[ Z_m = \frac{-1}{2c}\left ( \frac{\left ( \sum_{k=1}^{n}(\delta_{ik}-2u_iu_k)\mathbf{X_{kj}} \right  )-\theta}{p\theta} \right )\]
Define $S=\sum_{m=1}^{p} Z_m$. 
\[\mathbb{E}[S]=\mathbb{E}\left [\sum_{m=1}^{p} Z_m \right ]=u_i\]

Similar to above, we have $a_m = 1/{2cp}$ and $b_m = (-1/2c)(\delta_{im}-2u_iu_m)+1/{2cp}$. Thus, $a_m-b_m=(-1/2)(\delta_{im}-2u_iu_m)$.
$\sum_{m=1}^{p}(a_m-b_m)^2=\sum_{m=1}^{p}(1/4c^2)(\delta_{im}^2-4u_iu_m\delta_{im}+4u_i^2u_m^2)=p/(4c^2)$ Finally
According to Hoeffding's inequality:

\[
\mathbb{P}\left[\left \lvert \left(\frac{\sum_{j=1}^{p} \mathbf{Y_{ij}}}{p\theta}-1 \right)\left(\frac{-1}{2c}\right) - u_i \geq t \right \rvert \right] \leq 2\exp(-8t^2c^2\theta^2p) 
\]

This is for any given \(u_i\). We take the union bound over all \(u_i\)'s since we want to find the error in the $l_{\infty}$ sense.
We want this error to be (over the union bound), at most, say, $O(\frac{1}{n})$. Hence, \(2nexp(-8t^2c^2\theta^2p) \leq \frac{1}{n}\). On solving, we get
\begin{align*}
    p = \Omega\left(\frac{\log(2n^2)}{8t^2c^2\theta^2}\right)
\end{align*}

\end{proof}

\end{document}